\documentclass[a4paper,11pt,fleqn]{article}
\usepackage{amsmath,amssymb,amsthm} 

\setlength{\textwidth}{160.0mm}
\setlength{\textheight}{240.0mm}
\setlength{\oddsidemargin}{0mm}
\setlength{\evensidemargin}{0mm}
\setlength{\topmargin}{-15mm} 
\setlength{\parindent}{5.0mm}

\flushbottom

\newcommand{\todo}[1][\null]{\ensuremath{\clubsuit}}

\newcommand{\noprint}[1]{}

\newcommand{\p}{\partial}

\newcommand{\const}{\mathop{\rm const}\nolimits}

\newcounter{tbn}

\newcounter{mcasenum}

\newtheorem{theorem}{Theorem}

\newtheorem{corollary}{Corollary}

\newtheorem*{proposition*}{Proposition}
{\theoremstyle{definition}

}

\begin{document}

\advance\topsep-5pt

\par\noindent {\Large\bf
Equivalence transformations in the study of integrability
}

{\vspace{5mm}\par\noindent\large Olena O.~Vaneeva$^{\dag 1}$, Roman O.~Popovych$^{\dag\ddag 2}$ and Christodoulos Sophocleous$^{\S 3}$
\par\vspace{5mm}\par}

{\par\noindent\it
${}^\dag$\ Institute of Mathematics of NAS of Ukraine, 3 Tereshchenkivska Str., Kyiv-4, 01601 Ukraine\\[.5ex]
$^\ddag$\;Wolfgang Pauli Institute, Nordbergstra{\ss}e 15, A-1090 Wien, Austria\\[.5ex]
$^\S$\;Department of Mathematics and Statistics, University of Cyprus, Nicosia CY 1678, Cyprus
}
\vspace{4mm}
{\par\noindent
$\phantom{{}^\dag{}\;}$E-mail: \it$^1$vaneeva@imath.kiev.ua,
$^2$rop@imath.kiev.ua,
$^3$christod@ucy.ac.cy
\par}

{\vspace{10mm}\par\noindent\hspace*{5mm}\parbox{150mm}{\small
We discuss how point transformations can be used for the study of integrability,
in particular, for deriving classes of integrable variable-coefficient differential equations.
The procedure of finding the equivalence groupoid of a class of differential equations is described
and then specified for the case of evolution equations.
A class of fifth-order variable-coefficient KdV-like equations is studied within the framework suggested.
}\par\vspace{4mm}}

\noprint{ Keywords: integrable models,
fifth-order Korteweg--de Vries equations, equivalence transformations, admissible transformations, normalized class  of differential equations}

\section{Introduction}

Since the 1960s exactly solvable (integrable) partial differential equations (PDEs) that model real-world phenomena have been a topic of permanent interest.
In particular, the inverse scattering transform method was introduced in~\cite{Gardner&Greene&Kruskal&Miura1967},
and was applied therein to the prominent Korteweg--de Vries (KdV) equation $u_t=u_{xxx}+6uu_x$~\cite{Korteweg&de_Vries1895} in order to find its soliton solutions.
The notion of soliton had appeared earlier in~\cite{Zabusky&Kruskal1965}.
It was shown in~\cite{Miura&Gardner&Kruskal1968} that the KdV equation possesses an infinite set of conservation laws of arbitrarily high orders,
and this property appeared to be typical for integrable equations.
A new direct method (the Hirota bilinear method) for finding multisoliton solutions to integrable nonlinear evolution equations was suggested in~\cite{Hirota1971}.
In contrast to the inverse scattering transform method, the Hirota bilinear method is algebraic rather than analytical.
Classical tools based on B\"acklund and Darboux transformations have been recalled after a long time of oblivion and intensively developed
\cite{Ablowitz&Segur1981,Matveev&Salle1991,Rogers&Schief2002,Sakhnovich2007}.
These and other methods were then applied to a wide range of integrable equations.
See, e.g., reviews of the results and other noteworthy references in~\cite{Ablowitz&Segur1981,Matveev&Salle1991,encycl}.

According to~\cite{Calogero1991}, integrable equations can be divided into those
that are linearizable by an appropriate {\it Change of variables} ($C$-integrable equations)
and equations integrable by the inverse scattering transform ({\it Spectral transform}) method ($S$-integrable equations).

Among $C$-integrable equations there are, e.g., the famous Burgers equation $u_t+ uu_x=\nu u_{xx}$ \cite{Burgers1948}
that can be linearized to the heat equation by the Hopf--Cole transformation~\cite{Cole1951,Hopf50},
the Sharma--Tasso--Olver equation $u_t+u_{xxx}+3u^2u_x+3u_x^2+3uu_{xx}=0$~\cite{Olver1977,Sharma&Tasso1977}, which is the second member of the Burgers hierarchy,
the $u^{-2}$-diffusion equation (also named the Fujita--Storm equation) $u_t=(u^{-2}u_x)_x+au$~\cite{Bluman&Kumei1980,Storm1951},
the Fokas--Yortsos equation $u_t=(u^{-2}u_x)_x+au^{-2}u_x$~\cite{Fokas&Yortsos1982,Strampp1982b}.
Further examples in (1+1)-dimensions can be found in~\cite{Calogero1991}.

$S$-integrable equations in (1+1)-dimensions include the KdV  and modified KdV equations,
the Gardner equation (the combined KdV--mKdV equation) $u_t+uu_x+u^2u_x+u_{xxx}=0$~\cite{Miura1968},
the cylindrical KdV equation $u_t=u_{xxx}+6uu_x-\frac{1}{2t}u$~\cite{Lugovtsov&Lugovtsov1969},
the Dym equation~$u_t=u^3u_{xxx}$~\cite{Kruskal1975},
the sine-Gordon equation $u_{tt}-u_{xx}+\sin u=0,$ etc. See other examples of integrable equations, e.g., in~\cite{encycl}.

Most of the integrable PDEs considered at the beginning of the development of integrability theory were constant-coefficient ones.
At the same time, many model equations appearing in applications explicitly involve independent variables.
For example, the generalized Burgers equations describing the propagation of weakly nonlinear acoustic waves
under the influence of geometrical spreading and thermoviscous diffusion in non-dimensional variables
are represented as $u_t+uu_x=g(t)u_{xx}$ with $ g\ne0$~\cite{HammertonCrighton1989},
and these equations are not $C$-integrable for nonconstant values of $g$.
The KdV and cubic Schr\"odinger equations with time-dependent coefficients,
\begin{gather}\label{vcKdVSchr}
u_t+f(t)uu_x+g(t)u_{xxx}=0\quad\mbox{and}\quad iu_t+f(t)u_{xx}+g(t)|u|^2u=0,
\end{gather}
respectively, also appear in various applications~\cite{Grimshaw1979a,Grimshaw1979b}.
Here $f$ and $g$ are nonvanishing smooth functions of $t$.

Many papers on variable-coefficient equations have been published, especially, in recent years.
The usual topics of these papers are
the application of the Painlev\'e test to single out subclasses
of integrable equations within wider classes of variable-coefficient equations,
the construction of conservation laws, Lax pairs and bilinear representations
and finding exact soliton solutions by the Hirota bilinear method.
Since variable-coefficient models are often quite complicated
and the number of variable coefficients varies from one to five, or even to ten in some cases,
packages of symbolic computations are widely used in such papers.
At the same time, the equivalence between equations within the classes under study was neglected in many works,
see a discussion in~\cite{Popovych&Vaneeva2010}.
On the other hand, even in pioneering works on exactly solvable models
it was shown that, if an integrable PDE is related to another PDE
by certain change of variables (point or non-point), then the latter PDE is also integrable.
The classical examples are the connection between the KdV and mKdV equation via the Miura transformation,
the reducibility of the Gardner equation to the mKdV~\cite{Miura1968}
and of the cylindrical KdV equation to the classical KdV~\cite{Johnson1979,Lugovtsov&Lugovtsov1969}.
Other examples are given in~\cite{Joshi1987}.
It was proved in the latter reference that the KdV and nonlinear Schr\"odinger equations
with time-dependent coefficients~\eqref{vcKdVSchr} pass the Painlev\'e test
if and only if the coefficients $f$ and $g$ satisfy the conditions $g(t)=f(t)(a_1\int^t\!f(s)\,{\rm d}s+a_0)$
and $g(t)=f(t)/(a_1\int^t\!f(s)\,{\rm d}s+a_0)$, respectively,
where $a_1$ and $a_0$ are constants with $a_1^2+a_0^2\neq0$.
These conditions coincide with those of reducibility of equations~\eqref{vcKdVSchr} to their constant-coefficient counterparts,
which were obtained in~\cite{Grimshaw1979a,Grimshaw1979b}.

Another way to construct variable-coefficient integrable models from constant-coefficient members of integrable hierarchies
was presented in~\cite[Theorem~3.1]{Fuchssteiner1993}.
Any ``linear superposition'', with arbitrary time-dependent coefficients, of members of an integrable evolution hierarchy
that correspond to mutually commuting flows proved to be also integrable.

The study of point transformations within a given class of variable-coefficient PDEs
and the derivation of conditions under which such equations reduce to constant-coefficient integrable equations allow one to obtain solutions,
conservation laws, other objects and related information in an easier way than using the direct computations for variable-coefficient equations.
The present paper is devoted to the discussion of this subject.
The consideration is illustrated by variable-coefficient fifth-order KdV-like equations.

\section{Admissible transformations in classes of differential equations}

Two differential equations connected by a certain change of variables (a point or contact transformation) are called {\it similar} ones~\cite{Ovsiannikov1982}.
Sets of objects (such as, e.g., classical solutions, conservation laws, symmetries, B\"acklund transformations, and, under certain restrictions for transformations, Lax pairs)
constructed for one of these equations can be used to derive the corresponding sets for the other equation.
This is why for a {\it class} of differential equations parameterized by arbitrary elements (constants or functions),
it is highly important to study the relations between fixed equations that are induced by point transformations.
Such similarity relations are called {\it allowed}~\cite{Winternitz92}, {\it form-preserving}~\cite{Kingston&Sophocleous1998}, and {\it admissible}~\cite{popo2010a} transformations.
Roughly speaking, an admissible transformation is a triple consisting of two fixed equations from a class
and a~point (resp.\ contact) transformation that links these equations.
The set of admissible  transformations of a class naturally possesses
the groupoid structure with respect to the composition of transformations
and, hence, it is called the \emph{equivalence groupoid} of the class~\cite{Popovych&Bihlo2012}.

More specifically, let~$\mathcal L_\theta$ be a differential equation $L(x,u_{(p)},\theta(x,u_{(p)}))=0$ for the~unknown function $u$
of $m$~independent variables $x=(x_1,\ldots,x_m).$
Here $u_{(p)}$ denotes the set of all the derivatives of~$u$ with respect to~$x$
of order not greater than~$p$, including $u$ as the derivative of order zero,
and $L$ is a fixed function depending on~$x,$ $u_{(p)}$ and~$\theta$.
Within the local approach, which is taken in the present paper, it is convenient to interpret
the equation $\mathcal L_{\theta}$ as an algebraic equation in the $p$-order jet space $J^{p}(x|u)$.%
\footnote{Roughly speaking, the $p$th order jet space $J^{p}(x|u)$ with underlying space of $(x,u)$ can be assumed
as the space whose coordinates represent the independent variables~$x$, the dependent variable~$u$
and the derivatives of~$u$ with respect to~$x$ up to order~$p$~\cite[p.~96]{Olver1993}.}
The tuple $\theta$ of $k$ arbitrary (parametric) functions $\theta^1(x,u_{(p)})$, \ldots, $\theta^k(x,u_{(p)})$
runs through the set~$\mathcal S$ of solutions of the auxiliary system $S(x,u_{(p)},\theta_{(q)}(x,u_{(p)}))=0$
of differential equations with respect to $\theta$.
In this system $x$ and $u_{(p)}$ play the role of independent variables
and $\theta_{(q)}$ stands for the set of all the partial derivatives of $\theta$ of order not greater than $q$
with respect to the variables $x$ and $u_{(p)}$.
Usually the set $\mathcal S$ is additionally constrained by the condition
$\Sigma(x,u_{(p)},\theta_{(q)}(x,u_{(p)}))\ne0$ with another differential function~$\Sigma$.
In what follows, we call the functions $\theta$ arbitrary elements,
and \emph{the class of equations~$\mathcal{L}_\theta$ with the arbitrary elements $\theta$ running through $\mathcal S$}
is denoted by~$\mathcal L|_{\mathcal S}$.

To illustrate the above notions, consider the class of $n$th order (1+1)-dimensional evolution equations,
\begin{equation}\label{EqGenEvol}
u_t=H(t,x,u_0,u_1,\dots,u_n),
\end{equation}
where  $n\geqslant 2$, $u_j\equiv \p^j u/\p x^j$, $j=1,2,\dots$, and $u_0\equiv u$.
We shall also employ, where convenient or necessary, the following notation for low-order derivatives:
$u_x=u_1$, $u_{xx}=u_2$, $u_{xxx}=u_3$, etc.
In general, a subscript of a function denotes the differentiation with respect to the corresponding variable,
e.g., $u_t\equiv \p u/\p t$, $H_{u_i}\equiv\p H/\p u_i$.
For the above class, the tuple of arbitrary elements $\theta$ consists of a single arbitrary smooth function $H$ of its arguments.
The auxiliary equations to~$H$ singling out evolution equations among all $n$th order two-dimensional partial differential equations
form the system
\[
H_{u_{it}}=0,\ i=0,\dots,n-1,\quad H_{u_{itt}}=0,\ i=0,\dots,n-2,\quad \dots,
\]
meaning that the arbitrary element~$H$ does not depend on derivatives of~$u$ involving the differentiation with respect to~$t$.
The condition that the equation order equals~$n$ leads to the auxiliary inequality $H_{u_n}\neq 0$.
For quasilinear evolution equations, the arbitrary element~$H$ is linear in the highest-order derivative~$u_n$, i.e.,
the subclass~$\mathcal E_{\rm ql}$ of such equations is singled out from the entire class~\eqref{EqGenEvol} by the additional auxiliary equation $H_{u_nu_n}=0$.
Representing~$H$ in the form $H=Fu_n+G$ and interpreting $F=F(t,x,u_0,u_1,\dots,u_{n-1})$ and $G=G(t,x,u_0,u_1,\dots,u_{n-1})$ as new arbitrary elements,
we reparameterize the subclass~$\mathcal E_{\rm ql}$.
In terms of~$F$ and~$G$, the system of auxiliary equations and inequality for the subclass~$\mathcal E_{\rm ql}$ takes the form
\begin{gather}\label{EqAuxilaryForQuasilinEvolEqs}
\begin{split}
&F_{u_{it}}=G_{u_{it}}=0,\ i=0,\dots,n-1,\quad F_{u_{itt}}=G_{u_{itt}}=0,\ i=0,\dots,n-2,\quad \dots, \\
&F_{u_n}=G_{u_n}=0,\quad F\ne0.
\end{split}
\end{gather}
Imposing additional auxiliary equations on~$H$ (resp.\ $F$ and~$G$), one can construct a hierarchy of nested subclasses of evolution equations.

An admissible transformation in the class~$\mathcal L|_{\mathcal S}$ is a triple $(\theta,\tilde\theta,\varphi)$,
where $\theta,\tilde\theta\in\mathcal S$ such that the equations $\mathcal L_\theta$ and $\mathcal L_{\tilde\theta}$ are similar,
and $\varphi$ is a point transformation from $\mathcal L_\theta$ to $\mathcal L_{\tilde\theta}$.
The set~$\mathcal G^\sim=\mathcal G^\sim(\mathcal L|_{\mathcal S})$ of admissible transformations
is naturally equipped with the groupoid structure and called the \emph{equivalence groupoid} of the class~$\mathcal L|_{\mathcal S}$.

The transformations acting in the joint space of variables and arbitrary elements and preserving the class~$\mathcal L|_{\mathcal S}$
form the \emph{equivalence group} $G^{\sim}$ of this class.
There exist several kinds of equivalence groups depending on restrictions that are imposed on the transformations.
The \emph{usual equivalence group} of the class~$\mathcal L|_{\mathcal S}$, roughly speaking,
consists of the point transformations in the space of variables and arbitrary elements,
which preserve the whole class~$\mathcal L|_{\mathcal S}$ and are projectable on the variable space,
i.e., the transformation components corresponding to independent and dependent variables do not depend on arbitrary elements.
If the transformations of variables depend on arbitrary elements $\theta$,
then the corresponding equivalence group is called the {\it generalized equivalence group}.
If new arbitrary elements are expressed via old ones in a nonlocal way (e.g., new arbitrary elements are expressed via integrals of old ones),
then the equivalence group is said to be {\it extended}.
\emph{Generalized extended equivalence groups} possess both of these properties.
In case of the single dependent variable~$u$, one can consider transformations that are contact rather than point with respect to the equation variables;
this leads to the notion of a \emph{contact equivalence group}.
A~number of examples of finding and utilizing different kinds of equivalence groups are presented, e.g., in~\cite{ips2010a,vps2009,vane2012b}.

If any admissible transformation in the class~$\mathcal L|_{\mathcal S}$ is induced by a transformation from its equivalence group~$G^{\sim}$
(usual / generalized / extended / generalized extended), then this class is called \emph{normalized} (in the corresponding sense).
In other words, the equivalence groupoid of a~normalized class is generated by the equivalence group of this class.

In order to find the (point) equivalence groupoid $\mathcal G^\sim$ of the class $\mathcal L|_{\mathcal S}$, the direct method is usually applied.
We fix two arbitrary equations,
\[
\mathcal L_\theta\colon L(x,u_{(p)},\theta(x,u_{(p)}))=0\quad\mbox{and}\quad
\mathcal L_{\tilde\theta}\colon  L(\tilde x,\tilde u_{(p)},\tilde \theta(\tilde x,\tilde u_{(p)}))=0,\]
from the class $\mathcal L|_{\mathcal S}$ and suppose that they are connected by a nondegenerate point transformation~$\mathcal T$
of the general form $\tilde x_i=X^i(x,u)$, $i=1,\dots,m,$ $\tilde u=U(x,u)$.
Changing the variables in $\mathcal L_{\tilde\theta}$,
we express all involved derivatives $\tilde u_{(p)}$ in terms of $u_{(p)}$ and partial derivatives of the functions $X^i$ and $U$
with respect to $x$ and $u$ and substitute them into $\mathcal L_{\tilde\theta}$, which gives an equation $\tilde{\mathcal L}$ for the function $u$.
Then  $\tilde{\mathcal L}$ has to be satisfied identically for solutions of $\mathcal L_{\theta}$.
This requirement leads to the system of determining equations for the components of~$\mathcal T$.
The procedure of deriving the determining equations is algorithmic and consists of the following steps:
We solve the equation $\mathcal L_{\theta}$ as an algebraic equation in the $p$-order jet space $J^{p}(x|u)$
with respect to a derivative (called the \emph{principal} derivative)
and substitute the derived expression involving other derivatives of~$u$ (called \emph{parametric} derivatives) into $\tilde{\mathcal L}$.
As a result, we get a complicated equation involving the variables~$x$,
parametric derivatives of $u$, derivatives of the transformation components $X^i$ and $U$,
and derivatives of the arbitrary elements $\theta$ and $\tilde\theta$,
which should be identically satisfied with respect to~$x$ and parametric derivatives of $u$ for each appropriate transformation~$\mathcal T$.
Applying different techniques, in particular, the differentiation and splitting of this identity with respect to~$x$ and parametric derivatives of~$u$
gives rise to the determining equations for the functions~$X^i$ and~$U$ that define the transformation~$\mathcal T$.

The equivalence groupoids of many classes of differential equations have already been constructed in the literature,
see \cite{Bihlo&Cardoso-Bihlo&Popovych2012,Kingston&Sophocleous1998,Popovych&Bihlo2012,popo2010a,Popovych&Vaneeva2010,vps2009,vane2012b,Winternitz92} and references therein.
If the class $\mathcal L|_{\mathcal S}$ is contained in a superclass $\bar{\mathcal L}|_{\bar{\mathcal S}}$ whose equivalence groupoid~$\bar{\mathcal G}^\sim$ is known,
then the equivalence groupoid~$\mathcal G^\sim$ of~$\mathcal L|_{\mathcal S}$ is a subgroupoid of~$\bar{\mathcal G}^\sim$,
and hence all the determining equations and conditions for elements of~$\bar{\mathcal G}^\sim$ can be taken into account
in the very beginning of the application of the direct method for finding~$\mathcal G^\sim$.
If a class of differential equations is normalized,
the construction of its equivalence group
leads to an exhaustive and concise description of its equivalence groupoid.
This is why it is especially convenient to describe equivalence groupoids for hierarchies of nested normalized classes.

We consider a chain of nested normalized classes of evolution equations, which is of interest in view of the subject of the present paper.
It is a well-known folklore assertion~\cite{Magadeev1993}
that any contact transformation~$\mathcal T$ relating two fixed equations $u_t=H$ and $\tilde u_{\tilde t}=\tilde H$ from the class~\eqref{EqGenEvol} has the form
$\tilde t=T(t)$, $\tilde x=X(t,x,u,u_x)$, $\tilde u=U(t,x,u,u_x)$.
In comparison with the general contact transformation in the space of $(t,x,u)$,
the peculiarity is that the transformation component for~$t$ depends only on~$t$
and the transformation component for all the variables does not depend on~$u_t$.
The contact and nondegeneracy assumptions are reduced for~$\mathcal T$ to the conditions
$(U_x+U_uu_x)X_{u_x}=(X_x+X_uu_x)U_{u_x}$ and $T_t\ne0$, $\mathop{\rm rank}\p(X,U)/\p(x,u,u_x)=2$, respectively.
The standard prolongation of~$\mathcal T$ to the derivatives $u_1$, \dots, $u_n$ is carried out using the chain rule,
which gives $\tilde u_{\tilde x}=V(t,x,u,u_x)$,
where $V=(U_x+U_uu_x)/(X_x+X_uu_x)$ or $V=U_{u_x}/X_{u_x}$ if $X_x+X_uu_x\ne0$ or $X_{u_x}\ne0$, respectively,
and $\tilde u_i\equiv\p^i \tilde u/\p \tilde x^i=((1/D_xX)D_x)^{i-1}V$, $i=2,\dots,n$.
Here $D_x=\p_x+u_x\p_u+u_{tx}\p_{u_t}+u_{xx}\p_{u_x}+\cdots$ is the operator of total differentiation with respect to the variable~$x$.
The possibility of vanishing $X_x+X_uu_x$ and $X_{u_x}$ simultaneously is ruled out by the nondegeneracy assumption.
Moreover, the contact and nondegeneracy assumptions jointly imply that $(X_u,U_u)\ne(0,0)$.
The transformed arbitrary element~$\tilde H$ is equal~to
\[
\tilde H=\frac{U_u-X_uV}{T_t}H+\frac{U_t-X_tV}{T_t}.
\]
Each of the above contact transformations maps the entire class~\eqref{EqGenEvol} onto itself.
Therefore, its prolongation to the arbitrary element~$H$ belongs to
the contact equivalence group~$G^\sim_{\rm c}$ of the class~\eqref{EqGenEvol},
and any element of~$G^\sim_{\rm c}$ can be obtained in this way.
In other words, the equivalence group~$G^\sim_{\rm c}$ generates the whole contact equivalence groupoid~$\mathcal G^\sim_{\rm c}$
of the class~\eqref{EqGenEvol}, i.e., this class is contact-normalized,
which obviously implies its normalization with respect to point transformations as well.
If $n\geqslant3$, the subclass~$\mathcal E_{\rm ql}$ of $n$th order quasilinear evolution equations has
the same contact equivalence group and is contact-normalized too.

Each next subclass is singled out from the previous one by sequentially adding more equations to the system~\eqref{EqAuxilaryForQuasilinEvolEqs}
while preserving the property of normalization in the usual sense.
The additional constraints $F_{u_2}=\dots=F_{u_{n-1}}=0$ (i.e., $F=F(t,x,u,u_x)$\,) lead to the principal narrowing of the equivalence groupoid
of the corresponding subclass: Its contact equivalence group coincides with its point equivalence group.
Thus, when considering equations from this subclass, it suffices to use only the point equivalence.
Imposing the additional constraint $F_{u_1}=0$, we obtain a subclass in which the $x$-component of any equivalence transformation does not involve~$u$, $X_u=0$,
i.e., all equivalence transformations are fiber-preserving.
The equivalence group of the subclass of equations with $F$ depending only on~$t$, $F=F(t)$, consists of transformations satisfying the equation~$X_{xx}=0$.
Finally, for the subclass of equations of the~form
\begin{gather}\label{EqvcKdVlikeSuperclass}
u_t=F(t)u_n+G(t,x,u_0,u_1,\dots,u_{n-1}),\quad F\ne0,\quad G_{u_iu_{n-1}}=0,\ i=1,\dots,n-1,
\end{gather}
where $n\geqslant 2$, any equivalence transformation is linear in~$u$ since $U_{uu}=0$.
Collecting all determining equations for admissible transformations in the class~\eqref{EqvcKdVlikeSuperclass},
which are exhausted by the above equations $X_u=X_{xx}=U_{uu}=0$,
we derive the following assertion.

\begin{theorem}
The usual point equivalence group of the class~\eqref{EqvcKdVlikeSuperclass} consists of the transformations
\begin{gather}\label{EqEquivtransOfvcKdVlikeSuperclass}
\tilde t=T(t),\quad \tilde x=X^1(t)x+X^0(t),\quad \tilde u=U^1(t,x)u+U^0(t,x), \quad
\tilde F=\frac{(X^1)^n}{T_t}F,
\\
\nonumber\tilde G=\frac{U^1}{T_t}G-\left(\sum^{n-1}_{k=0}\!\begin{pmatrix}n\\k\end{pmatrix}\!U^1_{n-k}u_k+U^0_n\!\right)\!\frac F{T_t}
+\frac{U^1_t}{T_t}u+\frac{U^0_t}{T_t}-\frac{X^1_tx+X^0_t}{T_tX^1}(U^1u_x+U^1_xu+U^0_x),
\end{gather}
where $T_tX^1U^1\neq0$,
and the class~\eqref{EqvcKdVlikeSuperclass} is normalized with respect to this group.
\end{theorem}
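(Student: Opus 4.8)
The plan is to solve the determining equations $X_u=X_{xx}=U_{uu}=0$ which, as established along the chain of nested subclasses above, exhaust the constraints on admissible transformations of the class~\eqref{EqvcKdVlikeSuperclass}, and then to read off the induced action on the arbitrary elements $F$ and $G$. Since the preceding analysis reduces the problem to point transformations for this subclass, I may start from a transformation $\tilde t=T(t)$, $\tilde x=X(t,x,u)$, $\tilde u=U(t,x,u)$ with $T_t\ne0$. Integrating $X_u=0$ and then $X_{xx}=0$ gives $X=X^1(t)x+X^0(t)$, while $U_{uu}=0$ yields $U=U^1(t,x)u+U^0(t,x)$, which is exactly the form asserted in~\eqref{EqEquivtransOfvcKdVlikeSuperclass}. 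The nondegeneracy of $\mathcal T$ forces the Jacobian $\partial(X,U)/\partial(x,u)$, whose determinant equals $X^1U^1$, to be nonzero, so that $T_tX^1U^1\ne0$.

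Next I would prolong $\mathcal T$ to $u_n$ by the chain rule recorded above. Because $X=X^1(t)x+X^0(t)$ with $X_u=0$, one has $D_xX=X^1$ depending only on~$t$, hence $(1/D_xX)D_x=(1/X^1)D_x$ commutes with the factor $1/X^1$ and $\tilde u_n=(X^1)^{-n}D_x^{\,n-1}\bigl(U^1u_x+U^1_xu+U^0_x\bigr)$, with $V=(U^1u_x+U^1_xu+U^0_x)/X^1$. Expanding $D_x^{\,n-1}(U^1u_1)$ and $D_x^{\,n-1}(U^1_xu_0)$ by the Leibniz rule (using $D_xu_j=u_{j+1}$ and $D_x^{\,j}U^1=U^1_j$) and collecting the coefficient of each $u_k$, the two contributions combine through the Pascal identity $\binom{n-1}{k-1}+\binom{n-1}{k}=\binom{n}{k}$ into $\tilde u_n=(X^1)^{-n}\bigl(U^1u_n+\sum_{k=0}^{n-1}\binom{n}{k}U^1_{n-k}u_k+U^0_n\bigr)$.

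Finally I substitute this together with $\tilde u_{\tilde t}=(U_u/T_t)H+(U_t-X_tV)/T_t$ (specialized by $X_u=0$ and $H=Fu_n+G$) into the transformed equation $\tilde u_{\tilde t}=\tilde F\tilde u_n+\tilde G$. Equating the coefficients of the principal derivative $u_n$ gives $\tilde F=(X^1)^nF/T_t$, and the part not proportional to $u_n$ yields precisely the stated expression for $\tilde G$ after inserting $U_t=U^1_tu+U^0_t$, $X_t=X^1_tx+X^0_t$ and $V$. The crux for normalization is then that the variable components $T,X^1,X^0,U^1,U^0$ are constrained only among themselves and never by $F$ or $G$, while $\tilde F$ and $\tilde G$ are given by one and the same formula for every equation of the class; hence each such tuple defines a transformation in the joint space of variables and arbitrary elements that maps~\eqref{EqvcKdVlikeSuperclass} onto itself, so every admissible transformation is induced by an element of the usual equivalence group and the class is normalized. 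I expect the only delicate point to be the bookkeeping in the prolongation — in particular the index shift $j\mapsto n-k$ that merges the two Leibniz sums — whereas the integration of the determining equations and the coefficient matching are routine.
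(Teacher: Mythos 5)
Your proposal is correct and follows essentially the same route as the paper: the paper proves the theorem by collecting the determining equations $X_u=X_{xx}=U_{uu}=0$ accumulated along the chain of nested normalized subclasses and then specializing the general transformation formula $\tilde H=\frac{U_u-X_uV}{T_t}H+\frac{U_t-X_tV}{T_t}$ for evolution equations, exactly as you do. Your explicit Leibniz/Pascal computation of the prolongation $\tilde u_n$ and the coefficient matching merely fill in details the paper leaves implicit, and your closing normalization argument matches the paper's reasoning that the determining equations never involve $F$ or $G$, so the equivalence group generates the whole equivalence groupoid.
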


\section{Integrable subclasses in a class of fifth-order\\ variable-coefficient KdV-like equations}\label{SectionOnfKdV}

Consider the class of variable-coefficient fifth-order KdV-like equations of the form
\begin{gather}\arraycolsep=0ex\label{Eqvc5KdV0}
\begin{array}{l}
u_t+a(t)uu_{xxx}+b(t)u_xu_{xx}+c(t)u^2u_x+f(t)u u_x+g(t)u_{xxxxx}\\[1.5ex]\qquad {}+h(t)u_{xxx}+m(t)u+n(t)u_x+k(t)xu_x=0,
\end{array}
\end{gather}
where the functions $a$, $b$, $c$, $f$, $g$, $h$, $m$, $n$, and $k$ are arbitrary smooth functions of the time variable~$t$ with $g(a^2+b^2+c^2)\neq0$.
Recently certain subclasses of this class were studied, e.g., in~\cite{Xu2013,Yu&Gao&Sun&Liu2010,Yu&Gao&Sun&Liu2012}.

Thus, in~\cite{Xu2013} the integrability of equations from the class~\eqref{Eqvc5KdV0} with $k=0$ (and the re-denoted coefficients $f=d$, $g=l$ and $h=e$)
was investigated using the Painlev\'e test.
It was found that such equations are Painlev\'e integrable in the following three cases
\begin{gather*}
\mbox{\bf \phantom{II}I.}\quad b =a ,\quad c = \mu_1a e^{\int\!m\,{\rm d}t},\quad f =2\mu_2a ,\quad g =\frac{a}{5\mu_1} e^{-\int\!m\,{\rm d}t},\quad
h =\frac{\mu_2}{\mu_1} a e^{-\int\!m\,{\rm d}t} ;\\[1ex]
\mbox{\bf \phantom{I}II.}\quad b =2a ,\quad c = \mu_1a e^{\int\!m\,{\rm d}t},\quad f =2\mu_1h e^{\int\!m\,{\rm d}t},\quad g =\frac{3 a}{10\mu_1} e^{-\int\!m\,{\rm d}t};\\[1ex]
\mbox{\bf III.}\quad b =\frac52a ,\quad c = \mu_1a e^{\int\!m\,{\rm d}t},\quad  f =2\mu_2a ,\quad g =\frac{a}{5\mu_1} e^{-\int\!m\,{\rm d}t},\quad
h =\frac{\mu_2}{\mu_1} a e^{-\int\!m\,{\rm d}t}.
\end{gather*}
In all three cases $\mu_1$ and $\mu_2$ are arbitrary constants with $\mu_1\neq0$,
the functions $a$, $m$ and $n$ are arbitrary.
In Case~II the function $h$ is also arbitrary.
Here and in what follows an integral with respect to~$t$ should be interpreted as a fixed antiderivative of the integrand.
$N$-soliton solutions were constructed for the first two cases
whereas only one- and two-soliton solutions were presented in Case~III.

The same subclass of equations with $k=0$ was treated earlier in~\cite{Yu&Gao&Sun&Liu2010}.
Although it was stated that both the Painlev\'e test and the mapping to the completely integrable constant-coefficient counterparts were applied for separating integrable cases,
$N$-soliton solutions, a B\"acklund transformation and a Lax pair were constructed therein only for equations with additional constraints
\[
a=b=15g\nu e^{\int\!m\,{\rm d}t},\quad c=45g\nu^2e^{\,2\!\int\!m\,{\rm d}t},\quad f=h=0,
\]
where $\nu$ is a nonzero constant ($\nu=1/\rho$ in the notation of~\cite{Yu&Gao&Sun&Liu2010}),
which gives a particular subcase of Case~I.
The other integrable cases were missed.

In~\cite{Yu&Gao&Sun&Liu2012}, such objects were constructed
for equations of the form~\eqref{Eqvc5KdV0} with $f=h=0$ (and the re-denoted coefficients $g=d$ and $k=l$) under the constraints
\[
a=b=15g\nu e^{\int(m-2k)\,{\rm d}t},\qquad c=45g\nu^2e^{\,2\!\int(m-2k)\,{\rm d}t}.
\]
It was also indicated that these constraints are derived
both by Painlev\'e analysis and by mapping the corresponding variable-coefficient models to their completely integrable constant-coefficient counterparts.
In fact, this consideration just extends the results of~\cite{Yu&Gao&Sun&Liu2010} to the case of nonzero~$k$,
although the parameter~$k$ is not essential and can be set to zero by a point transformation.

We show that all the mentioned cases of integrable equations from the class~\eqref{Eqvc5KdV0} are reduced by point transformations
to well-known fifth-order integrable evolution equations.
To achieve this, we present a complete description of admissible transformations between equations from this class.

\begin{theorem}
The generalized extended equivalence group $G^\sim$ of the class~\eqref{Eqvc5KdV0} consists of  the transformations
\begin{gather}\label{Eqvc5KdVEquivGroup}
\begin{split}&
\tilde t=\alpha(t),\quad
\tilde x=\beta(t)x+\gamma(t),\quad
\tilde u=\varphi(t)\left(u+\sigma e^{-\int\!m\,{\rm d}t}\right),
\\&
\tilde a=\frac{\beta^3}{\alpha_t\varphi}a, \quad
\tilde b=\frac{\beta^3}{\alpha_t\varphi}b,\quad
\tilde c=\frac{\beta}{\alpha_t\varphi^2}c,\quad
\tilde f=\frac{\beta}{\alpha_t\varphi}\left(f-2\sigma ce^{-\int\!m\,{\rm d}t}\right),
\\&
\tilde g=\frac{\beta^5}{\alpha_t}g,\quad
\tilde h=\frac{\beta^3}{\alpha_t}\left(h-\sigma ae^{-\int\!m\,{\rm d}t}\right),\quad \tilde {m}=\frac{1}{\alpha_t}\left(m-\frac{\varphi_t}{\varphi}\right),
\\&
\tilde n=\frac{\beta}{\alpha_t}\left(n+\left(\frac{\gamma}\beta\right)_t-k\frac{\gamma}\beta+\sigma^2ce^{-2\int\!m\,{\rm d}t}-\sigma fe^{-\int\!m\,{\rm d}t}\right),\quad
\tilde k=\frac1{\alpha_t}\left(k+\frac{\beta_t}\beta\right),
\end{split}
\end{gather}
where $\alpha$, $\beta$, $\gamma$, and $\varphi$ run through the set of smooth functions of~$t$ with $\alpha_t\beta\varphi\ne0$, and $\sigma$ is an  arbitrary constant.
This group generates the entire equivalence groupoid $\mathcal G^\sim$ of the class~\eqref{Eqvc5KdV0},
i.e., the class~\eqref{Eqvc5KdV0} is normalized in the generalized extended sense.
\end{theorem}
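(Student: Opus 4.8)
The plan is to exploit that the class~\eqref{Eqvc5KdV0} sits inside the normalized superclass~\eqref{EqvcKdVlikeSuperclass}. Solving~\eqref{Eqvc5KdV0} for~$u_t$ puts it in the form $u_t=Fu_5+G$ with $F=-g\ne0$ and $G=-(auu_3+bu_1u_2+cu^2u_1+fuu_1+hu_3+mu+nu_1+kxu_1)$; since this $G$ is free of $u_4=u_{n-1}$, all the auxiliary equations $G_{u_iu_4}=0$ hold, so~\eqref{Eqvc5KdV0} is a subclass of~\eqref{EqvcKdVlikeSuperclass} with $n=5$. By the preceding theorem that superclass is normalized, hence any admissible transformation within it---in particular any admissible transformation of~\eqref{Eqvc5KdV0}---is generated by its equivalence group and therefore already has the triangular form~\eqref{EqEquivtransOfvcKdVlikeSuperclass}, $\tilde t=T(t)$, $\tilde x=X^1(t)x+X^0(t)$, $\tilde u=U^1(t,x)u+U^0(t,x)$ with $T_tX^1U^1\ne0$, acting on $(F,G)$ by the formulas given there. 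It then remains to impose that the target equation again lies in~\eqref{Eqvc5KdV0} and to read off the transformation of the arbitrary elements.

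To do this I would substitute the transformation into the target equation $\tilde u_{\tilde t}=\tilde F\tilde u_5+\tilde G$, express all $\tilde x$-derivatives through $x$-derivatives by $\partial_{\tilde x}=(X^1)^{-1}\partial_x$ and the $\tilde t$-derivative by $\partial_{\tilde t}=T_t^{-1}(\partial_t-(X^1_tx+X^0_t)(X^1)^{-1}\partial_x)$, eliminate $u_t$ using the source equation, and collect the resulting identity as a polynomial in the parametric variables $u,u_1,\dots,u_5$ and in~$x$. Splitting with respect to these variables produces the determining equations, and the two highest splittings are decisive: the coefficient of $u_5$ gives $\tilde g=(X^1)^5T_t^{-1}g$, while the coefficient of $u_4$ reduces to $5\tilde g\,U^1_x/(X^1)^5=0$, whence $U^1_x=0$ and $U^1=U^1(t)$. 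With $U^1$ now $x$-independent, the coefficients of the nonlinear monomials $uu_3$, $u_1u_2$ and $u^2u_1$ produce the rules for $\tilde a$, $\tilde b$ and $\tilde c$, while the coefficient of the linear term $u_3$ yields $\tilde h=(X^1)^3T_t^{-1}(h-aU^0/U^1)$ and that of $uu_1$ the analogous rule for $\tilde f$; since class membership requires every transformed coefficient to depend on~$t$ only, this forces $U^0/U^1$, and hence $U^0$, to be a function of~$t$.

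The lower-order splittings complete the description. The pure-$u$ coefficient gives $\tilde m=T_t^{-1}(m-U^1_t/U^1)$; the term free of $u$ and its derivatives (which must vanish, as~\eqref{Eqvc5KdV0} has no inhomogeneous term) yields, after using the rule for~$\tilde m$, the ordinary differential equation $(U^0/U^1)_t=-m\,(U^0/U^1)$, so that $U^0/U^1=\sigma e^{-\int m\,{\rm d}t}$ with a single integration constant~$\sigma$; the coefficient of $xu_1$ gives $\tilde k=T_t^{-1}(k+X^1_t/X^1)$; and the coefficient of $u_1$, after substitution of the already-found rules and use of the identity $X^0_t/T_t-X^0X^1_t/(X^1T_t)=(X^1/T_t)(X^0/X^1)_t$, collapses to the stated expression for~$\tilde n$. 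Setting $\alpha=T$, $\beta=X^1$, $\gamma=X^0$, $\varphi=U^1$ turns the collected relations into exactly~\eqref{Eqvc5KdVEquivGroup}. Since all components depend only on~$t$ together with the one constant~$\sigma$, and since conversely every choice of $\alpha,\beta,\gamma,\varphi$ with $\alpha_t\beta\varphi\ne0$ and of~$\sigma$ gives an admissible transformation, the group~\eqref{Eqvc5KdVEquivGroup} generates the whole groupoid~$\mathcal G^\sim$; that is, the class is normalized in the generalized extended sense, the qualifier reflecting that the $\tilde u$-component depends on the arbitrary element~$m$ through the nonlocal factor $e^{-\int m\,{\rm d}t}$.

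I expect the main obstacle to be bookkeeping rather than conceptual: correctly isolating each monomial's coefficient amid the cross-terms generated by the nonlinear contributions $\tilde c\tilde u^2\tilde u_{\tilde x}$, $\tilde f\tilde u\tilde u_{\tilde x}$ and $\tilde a\tilde u\tilde u_{\tilde x\tilde x\tilde x}$, and in particular verifying that the emergent constraints are exhausted by $U^1_x=0$, $U^0_x=0$ and the single equation for $U^0/U^1$. Establishing that no further restriction on the arbitrary elements survives is what secures surjectivity onto the group~\eqref{Eqvc5KdVEquivGroup} and hence the full normalization claim; the most delicate single computation is the reconstruction of~$\tilde n$, which assumes its compact form only after all the other rules have been inserted.
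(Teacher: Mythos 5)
Your overall strategy coincides with the paper's own proof: embed \eqref{Eqvc5KdV0} as a subclass of the normalized superclass \eqref{EqvcKdVlikeSuperclass} with $n=5$, conclude that any admissible transformation must already have the restricted form \eqref{EqEquivtransOfvcKdVlikeSuperclass}, then substitute into the target equation, eliminate $u_t$ by means of the source equation, split with respect to the parametric jet variables, and solve the resulting determining equations. Your individual splittings (the $u_5$ coefficient giving $\tilde g$, the $u_4$ coefficient giving $U^1_x=0$, the monomials $uu_3$, $u_1u_2$, $u^2u_1$ giving $\tilde a$, $\tilde b$, $\tilde c$, the constant term giving $(U^0/U^1)_t=-m\,U^0/U^1$ and hence the single constant $\sigma$, the $xu_1$ and $u_1$ coefficients giving $\tilde k$ and $\tilde n$) all check out and reproduce the paper's list of determining equations; your reconstruction of $\tilde n$ via $X^0_t-X^0X^1_t/X^1=X^1(X^0/X^1)_t$ is exactly what is needed. (One difference in scope: the paper starts from a contact transformation and uses the superclass discussion to reduce it to a point one, whereas you work with point transformations from the outset; since $\mathcal G^\sim$ is the point equivalence groupoid, this does not affect the theorem as stated.)

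There is, however, one genuine gap: your derivation that $U^0$ depends only on~$t$. You obtain it from the rule for $\tilde h$ (which involves $aU^0/U^1$) and the rule for $\tilde f$ (which involves $\tilde cU^1U^0$, i.e., effectively $cU^0$), and then invoke class membership of the transformed coefficients. This forces $U^0_x=0$ only when $a\ne0$ or $c\ne0$. But the class \eqref{Eqvc5KdV0} is constrained by $g(a^2+b^2+c^2)\ne0$, so it contains equations with $a=c=0$ and $b\ne0$ (for instance $u_t+u_xu_{xx}+u_{xxxxx}=0$), for which both of your relations are vacuous as far as $U^0$ is concerned. The missing piece is the coefficient of the bare monomial $u_2$: it arises solely from $\tilde b\,\tilde u_{\tilde x}\tilde u_{\tilde x\tilde x}$ and equals $\tilde b\,U^1U^0_x/(X^1)^3$, so it forces $U^0_x=0$ whenever $b\ne0$ (similarly, the coefficient of bare $u^2$, equal to $\tilde c(U^1)^2U^0_x/X^1$, handles $c\ne0$ directly, without any detour through class membership). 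This is precisely why the paper records that the splitting yields $U^1_x=0$ outright and that $U^0_x=0$ follows \emph{under the condition} $a^2+b^2+c^2\ne0$: all three cases $a\ne0$, $b\ne0$, $c\ne0$ must be covered, not just two of them. Once this case is patched, the remainder of your computation goes through and yields \eqref{Eqvc5KdVEquivGroup}, together with the normalization claim.
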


\begin{proof}
Suppose that two arbitrary fixed equations from the class under consideration,
i.e., an equation of the form~\eqref{Eqvc5KdV0} and the equation
\begin{gather}\arraycolsep=0ex \label{Eqvc5KdV0tilde}
\begin{array}{l}
\tilde u_{\tilde t}+\tilde a(\tilde t)\tilde u\tilde u_{\tilde x\tilde x\tilde x}+\tilde b(\tilde t)\tilde u_{\tilde x}\tilde u_{\tilde x\tilde x}
+\tilde c(\tilde t)\tilde u^2\tilde u_{\tilde x}+\tilde f(\tilde t)\tilde u\tilde u_{\tilde x}+\tilde g(\tilde t)\tilde u_{\tilde x\tilde x\tilde x\tilde x\tilde x}
\\[1.5ex]\qquad\quad{}
+\tilde h(\tilde t)\tilde u_{\tilde x\tilde x\tilde x}+\tilde m(\tilde t)\tilde u+\tilde n(\tilde t)\tilde u_{\tilde x}+\tilde k(\tilde t)\tilde x\tilde u_{\tilde x}=0
\end{array}
\end{gather}
are similar with respect to a contact transformation~$\mathcal T$.
As the class~\eqref{Eqvc5KdV0} is a subclass of the normalized class~\eqref{EqvcKdVlikeSuperclass},
the transformation~$\mathcal T$ satisfies all the determining equations
for elements of the equivalence group of the class~\eqref{EqvcKdVlikeSuperclass}.
In other words, one can assume from the very beginning
that the transformation~$\mathcal T$ has the form~\eqref{EqEquivtransOfvcKdVlikeSuperclass}.
We substitute the expressions for the new variables (with tildes)
into~\eqref{Eqvc5KdV0tilde} and obtain an equation in the old variables (without tildes).
This should be an identity on the manifold defined by~\eqref{Eqvc5KdV0}
in the fifth-order jet space~$J^5$ with the independent variables $(t,x)$ and the dependent variable~$u$.
In order to take into account the constraint between variables of~$J^5$ on the manifold defined by~\eqref{Eqvc5KdV0},
we substitute the expression for~$u_t$ implied by~\eqref{Eqvc5KdV0}.
The splitting of the above identity with respect to~$u$ and its derivatives $u_x$, $u_{xx}$, $u_{xxx}$, $u_{xxxx}$ and $u_{xxxxx}$
results in the determining equations for the functions~$T$, $X^0$, $X^1$, $U^0$ and $U^1$, which include the equation $U^1_x=0$.
If the condition $a^2+b^2+c^2\neq0$ holds, then we additionally get $U^0_x=0$.
The rest of the determining equations is exhausted by
\begin{gather*}
\tilde m U^0 T_t+U^0_t=0,\quad
\tilde aU^1T_t=a (X^1)^3,\quad
\tilde bU^1T_t=b (X^1)^3,\quad
\tilde c(U^1)^2T_t=c X^1,\\
\tilde gT_t=g (X^1)^5,\quad
\tilde fU^1T_t=fX^1-2\tilde cU^1U^0T_t,\quad
\tilde hT_t=h (X^1)^3-\tilde aU^0T_t,\\
\tilde mU^1T_t=mU^1-U^1_t,\quad
\tilde n T_t=nX^1+X^0_t-\tilde kX^0T_t-\tilde c(U^0)^2T_t-\tilde fU^0T_t,\\
\tilde kX^1T_t=kX^1+X^1_t.
\end{gather*}
Introducing the notations $T=\alpha$, $X^1=\beta$, $X^0=\gamma$, and $U^1=\varphi$ and solving the determining equations,
we get the statement of Theorem~1.
\end{proof}

\begin{corollary}
The subclass of the class~\eqref{Eqvc5KdV0} with $fh=0$ and $(a,c)\ne(0,0)$
is normalized with respect to its usual equivalence group
consisting of the transformations~\eqref{Eqvc5KdVEquivGroup} with $\sigma=0.$
\end{corollary}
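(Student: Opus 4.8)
The plan is to deduce the corollary from the normalization of the whole class~\eqref{Eqvc5KdV0} rather than to rerun the direct method from scratch. Since the subclass singled out by $fh=0$ and $(a,c)\ne(0,0)$ is contained in the class~\eqref{Eqvc5KdV0}, its equivalence groupoid is a subgroupoid of the equivalence groupoid of~\eqref{Eqvc5KdV0}. By the preceding theorem the latter is generated by the generalized extended group~$G^\sim$, so every admissible transformation of the subclass is already of the form~\eqref{Eqvc5KdVEquivGroup} for some $\alpha$, $\beta$, $\gamma$, $\varphi$ and some constant~$\sigma$. The whole task thus reduces to identifying which of these transformations keep both the source and the target equations inside the subclass, that is, which are compatible with the defining constraints.

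First I would impose that the transformed coefficients satisfy the same constraints as the original ones. The requirement $(\tilde a,\tilde c)\ne(0,0)$ carries no new information, because $\tilde a$ and $\tilde c$ are nonvanishing multiples of $a$ and $c$, so it is equivalent to $(a,c)\ne(0,0)$. The effective condition is $\tilde f\tilde h=0$. Substituting the expressions for $\tilde f$ and $\tilde h$ from~\eqref{Eqvc5KdVEquivGroup} and dropping the nonvanishing prefactor, this reads $(f-2\sigma c\,e^{-\int m\,{\rm d}t})(h-\sigma a\,e^{-\int m\,{\rm d}t})=0$; expanding and using the source constraint $fh=0$ collapses it to $\sigma\,e^{-\int m\,{\rm d}t}\bigl(af+2ch-2\sigma ac\,e^{-\int m\,{\rm d}t}\bigr)=0$. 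As the exponential factor never vanishes, this leaves the dichotomy $\sigma=0$ or $af+2ch=2\sigma ac\,e^{-\int m\,{\rm d}t}$.

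The hard part is to discard the second alternative, and this is precisely where the hypothesis $(a,c)\ne(0,0)$ is used. Treating $a$, $c$, $f$, $h$ as arbitrary elements subject only to $fh=0$ and $(a,c)\ne(0,0)$, the relation $af+2ch=2\sigma ac\,e^{-\int m\,{\rm d}t}$ is an additional tie between otherwise independent coefficients and cannot hold across the subclass unless $\sigma=0$; I would confirm this by splitting according to which of $a$, $c$, $f$, $h$ vanish and using in each branch that at least one of $a$, $c$ is nonzero while $f$, $h$ remain free. Once $\sigma=0$ is forced, the transformation~\eqref{Eqvc5KdVEquivGroup} simplifies: the component $\tilde u=\varphi u$ no longer involves the arbitrary element $m$ through $e^{-\int m\,{\rm d}t}$, and every transformed arbitrary element is expressed locally in terms of the old ones. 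Hence each admissible transformation of the subclass is induced by a point transformation whose variable part is independent of the arbitrary elements and whose action on them is local, i.e.\ by a usual equivalence transformation. This shows that the usual equivalence group is exactly the set of transformations~\eqref{Eqvc5KdVEquivGroup} with $\sigma=0$ and that it generates the whole equivalence groupoid of the subclass, which is the asserted normalization; the reverse inclusion, that every such $\sigma=0$ transformation does preserve the subclass, is immediate from the same coefficient formulas.
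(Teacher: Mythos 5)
Your overall strategy --- restricting the known equivalence groupoid of the class~\eqref{Eqvc5KdV0} given by the preceding theorem to the subclass, and then determining which transformations~\eqref{Eqvc5KdVEquivGroup} are compatible with the subclass constraints --- is exactly the intended route, and your first reduction (every admissible transformation of the subclass is of the form~\eqref{Eqvc5KdVEquivGroup}, and the condition $(\tilde a,\tilde c)\ne(0,0)$ is automatic) is correct. The proof breaks at the step where you discard the alternative $af+2ch=2\sigma ac\,e^{-\int m\,{\rm d}t}$. Normalization is a statement about each \emph{individual} admissible transformation, i.e.\ about a point transformation linking one fixed source equation to one fixed target equation; the displayed relation therefore only has to hold for the coefficients of that particular source equation, not ``across the subclass''. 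Your argument that the relation ``cannot hold across the subclass unless $\sigma=0$'' silently replaces the quantifier ``for the given equation'' by ``for all equations of the subclass'', which is the condition relevant for membership in the \emph{equivalence group}, not the condition needed to rule out admissible transformations. Moreover, the case analysis you propose does not close this gap: in the branch $c=f=0$, $a\ne0$, $h$ arbitrary (and likewise $a=h=0$, $c\ne0$, $f$ arbitrary) both sides of the relation vanish identically, so it is satisfied by \emph{every} $\sigma$. Concretely, $\tilde u=u+\sigma$ maps $u_t+uu_{xxx}+u_{xxxxx}=0$ to $\tilde u_t+\tilde u\tilde u_{xxx}+\tilde u_{xxxxx}-\sigma\tilde u_{xxx}=0$; both equations satisfy $fh=0$ and $(a,c)\ne(0,0)$, yet this shift is not generated by any transformation~\eqref{Eqvc5KdVEquivGroup} with $\sigma=0$, since all of those are homogeneous linear in~$u$.

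What this reveals is that the product reading of the constraint, on which your whole computation rests, cannot be the operative one: under it, the normalization claim itself fails by the example above. The constraint must be understood as $f=h=0$ (both coefficients vanish identically). Under that reading the corollary follows from the theorem in one line: the target equation must satisfy $\tilde f=0$ and $\tilde h=0$ \emph{separately}, which by~\eqref{Eqvc5KdVEquivGroup} gives $2\sigma c\,e^{-\int m\,{\rm d}t}=0$ and $\sigma a\,e^{-\int m\,{\rm d}t}=0$, hence $\sigma a=\sigma c=0$, and $(a,c)\ne(0,0)$ forces $\sigma=0$; conversely, every transformation with $\sigma=0$ manifestly preserves $f=h=0$ and is a usual equivalence transformation, its variable components being free of arbitrary elements and its action on the arbitrary elements being local. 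Your expansion $(f-2\sigma ce^{-\int m\,{\rm d}t})(h-\sigma ae^{-\int m\,{\rm d}t})=0$ followed by the collapse via the product $fh=0$ discards precisely the separate vanishing of the two factors on which this one-line argument depends.
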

\medskip
\begin{corollary}
The subclass of the class~\eqref{Eqvc5KdV0} with $k=0$
is normalized with respect to its generalized extended equivalence group
that comprises the transformations~\eqref{Eqvc5KdVEquivGroup} with $\beta={\rm const}$.
\end{corollary}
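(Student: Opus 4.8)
The plan is to sidestep the direct method entirely and to leverage the normalization of the full class~\eqref{Eqvc5KdV0} that was just established. Write $\mathcal K_0$ for the subclass singled out by the constraint $k=0$. Since $\mathcal K_0$ is contained in~\eqref{Eqvc5KdV0}, its equivalence groupoid is a subgroupoid of the equivalence groupoid $\mathcal G^\sim$ of the larger class; and because~\eqref{Eqvc5KdV0} is normalized in the generalized extended sense, every admissible transformation of it --- hence, in particular, every admissible transformation of $\mathcal K_0$ --- is induced by an element of $G^\sim$, i.e.\ takes the form~\eqref{Eqvc5KdVEquivGroup}. So the whole task reduces to singling out those transformations~\eqref{Eqvc5KdVEquivGroup} that respect the defining constraint, namely that map $k=0$ to $\tilde k=0$.

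First I would read off the relevant component of~\eqref{Eqvc5KdVEquivGroup}, the transformation rule $\tilde k=\alpha_t^{-1}\bigl(k+\beta_t/\beta\bigr)$, and substitute $k=0$, obtaining $\tilde k=\beta_t/(\alpha_t\beta)$. Because the nondegeneracy condition guarantees $\alpha_t\beta\ne0$, this vanishes if and only if $\beta_t=0$, i.e.\ $\beta$ is constant. The key feature here is that the rule for $\tilde k$ involves only $\alpha$ and $\beta$ and is independent of the remaining arbitrary elements, so the condition cuts out exactly the transformations~\eqref{Eqvc5KdVEquivGroup} with $\beta=\const$. Conversely, for such transformations $k=0$ forces $\tilde k=0$ while the other formulas of~\eqref{Eqvc5KdVEquivGroup} still deliver smooth functions of~$t$ obeying the nonvanishing requirement $\tilde g(\tilde a^2+\tilde b^2+\tilde c^2)\ne0$; hence they map $\mathcal K_0$ into itself.

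It then remains to assemble the conclusion. The transformations~\eqref{Eqvc5KdVEquivGroup} with $\beta=\const$ are precisely the elements of $G^\sim$ stabilizing $\mathcal K_0$, and they form a subgroup --- the condition $\beta=\const$ is manifestly preserved under composition and inversion, since it merely expresses that the $x$-component is affine with a constant leading coefficient. This subgroup is therefore the generalized extended equivalence group of $\mathcal K_0$. Finally, any admissible transformation of $\mathcal K_0$ joins two equations with $k=\tilde k=0$ and is induced by some~\eqref{Eqvc5KdVEquivGroup}; the computation above forces $\beta=\const$ for it, so it lies in this subgroup, which proves that $\mathcal K_0$ is normalized in the generalized extended sense. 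The hard part, such as it is, is conceptual rather than computational: one must justify that inheriting normalization from the superclass legitimately replaces the full derivation of determining equations by the single preservation condition $\tilde k=0$, and check that this condition is ``clean'', depending only on $\alpha$ and $\beta$.
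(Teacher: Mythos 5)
Your proposal is correct and follows exactly the route the paper intends: the corollary is an immediate consequence of Theorem~2, obtained by restricting the normalized superclass's equivalence group~\eqref{Eqvc5KdVEquivGroup} via the preservation condition $\tilde k=0$, which by the rule $\tilde k=\alpha_t^{-1}(k+\beta_t/\beta)$ forces $\beta_t=0$. The paper leaves this argument implicit, and your write-up supplies precisely that reasoning, including the closure check that the $\beta=\const$ transformations form a subgroup.
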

\medskip
\begin{corollary}\label{Corollary5KdVEqInessentialArbitraryElements}
Any equation from the class~\eqref{Eqvc5KdV0} can be reduced by the point transformation
\begin{equation}\label{EqTransVC5KdV0}
\tilde t=\int ge^{-5\int\!k\,{\rm d}t}\,{\rm d}t,\qquad\tilde x=e^{-\int\!k\,{\rm d}t}x-\int ne^{-\int\!k\,{\rm d}t}\,{\rm d}t,\qquad \tilde u=e^{\int\!m\,{\rm d}t}u
\end{equation}
to an equation from the same class with $g=1$ and $m=n=k=0$.
The subclass of the class~\eqref{Eqvc5KdV0} singled out by the constraints $g=1$ and $m=n=k=0$ is normalized with respect to
its generalized extended equivalence group~$G^\sim_1$ consisting of the transformations~\eqref{Eqvc5KdVEquivGroup} with
\mbox{$\beta_t=\varphi_t=0$}, $\alpha_t=\beta^5$ and $\gamma_t=\sigma\beta f-\sigma^2\beta c$.
\end{corollary}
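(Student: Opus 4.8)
The plan is to handle the two assertions separately, reducing both to the explicit description of the equivalence group $G^\sim$ furnished by Theorem~2. The first assertion is a direct verification that the prescribed map is a particular element of~$G^\sim$ with the right effect on the arbitrary elements, while the second follows from the normalization of the whole class~\eqref{Eqvc5KdV0} by a standard subclass argument.

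For the reduction~\eqref{EqTransVC5KdV0}, I would first recognize the prescribed transformation as the element of~$G^\sim$ with
\[
\alpha_t=ge^{-5\int\!k\,{\rm d}t},\quad \beta=e^{-\int\!k\,{\rm d}t},\quad \gamma=-\int\!ne^{-\int\!k\,{\rm d}t}\,{\rm d}t,\quad \varphi=e^{\int\!m\,{\rm d}t},\quad \sigma=0,
\]
and then substitute these into the transformation rules~\eqref{Eqvc5KdVEquivGroup} for the arbitrary elements. The key observations are that $\varphi_t/\varphi=m$ and $\beta_t/\beta=-k$, which immediately give $\tilde g=\beta^5g/\alpha_t=1$, $\tilde m=(m-\varphi_t/\varphi)/\alpha_t=0$ and $\tilde k=(k+\beta_t/\beta)/\alpha_t=0$. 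The only nonalgebraic step is checking $\tilde n=0$: here one differentiates $\gamma/\beta=-\left(\int\!ne^{-\int\!k\,{\rm d}t}\,{\rm d}t\right)e^{\int\!k\,{\rm d}t}$ and verifies, via the product rule and the defining antiderivative, that $(\gamma/\beta)_t=-n+k\gamma/\beta$, whence $n+(\gamma/\beta)_t-k\gamma/\beta=0$. Since $\sigma=0$ annihilates the two remaining terms in the formula for~$\tilde n$, this yields $\tilde n=0$, so the image equation indeed lies in the subclass with $g=1$, $m=n=k=0$.

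For the normalization statement, the argument rests on Theorem~2: the whole class~\eqref{Eqvc5KdV0} is normalized in the generalized extended sense, so every admissible transformation between two of its equations is induced by an element of~$G^\sim$ of the form~\eqref{Eqvc5KdVEquivGroup}. An admissible transformation \emph{within} the subclass is precisely such an element that sends an equation satisfying $g=1$, $m=n=k=0$ to an equation again satisfying $\tilde g=1$, $\tilde m=\tilde n=\tilde k=0$. Imposing these four conditions on~\eqref{Eqvc5KdVEquivGroup} with $g=1$ and $m=n=k=0$ singles out exactly the parameter constraints of $G^\sim_1$: the condition $\tilde g=1$ forces $\alpha_t=\beta^5$; $\tilde m=0$ forces $\varphi_t=0$; $\tilde k=0$ forces $\beta_t=0$; and, using $\beta_t=0$, the condition $\tilde n=0$ collapses to $\gamma_t=\sigma\beta f-\sigma^2\beta c$. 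Hence the subset of~$G^\sim$ preserving the subclass is precisely~$G^\sim_1$, and these transformations generate the equivalence groupoid of the subclass, which is the assertion of normalization.

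The main subtlety I would flag is that $G^\sim_1$ remains a genuinely \emph{generalized extended} group rather than a usual one: the last constraint $\gamma_t=\sigma\beta f-\sigma^2\beta c$ makes the shift~$\gamma$, and hence the variable transformation~$\tilde x$, depend nonlocally, through integration in~$t$, on the arbitrary elements~$f$ and~$c$ of the equation being transformed. The only real computational obstacle is the identity $(\gamma/\beta)_t=-n+k\gamma/\beta$ in the first part, which must be carried out carefully because $\gamma$ and $\beta$ are themselves antiderivatives; once that is in place, everything else is algebraic bookkeeping inside the formulas already established in Theorem~2.
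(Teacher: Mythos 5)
Your proposal is correct and follows exactly the route the paper intends: the corollary is stated as an immediate consequence of Theorem~2, obtained by specializing the parameters of~\eqref{Eqvc5KdVEquivGroup} to the transformation~\eqref{EqTransVC5KdV0} (with $\sigma=0$) and then imposing the subclass constraints $\tilde g=1$, $\tilde m=\tilde n=\tilde k=0$ on a general element of~$G^\sim$ to read off the conditions $\beta_t=\varphi_t=0$, $\alpha_t=\beta^5$, $\gamma_t=\sigma\beta f-\sigma^2\beta c$ defining~$G^\sim_1$. Your verification of $(\gamma/\beta)_t=-n+k\gamma/\beta$, the subclass-normalization argument inherited from the normalization of the full class, and the remark on why $G^\sim_1$ is genuinely generalized extended (the dependence of $\gamma$ on $f$ and $c$ through integration) are all accurate.
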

\medskip

Transformations from the equivalence group $G^\sim$ have a nice specific structure.
The principal property is that they are fiber-preserving
(i.e., the transformation components corresponding to the independent variables~$t$ and~$x$ depend only on these variables) and, moreover, they are linear in~$u$.
An additional bonus is that the transformation component for~$t$ depends only on~$t$ and the transformation component for~$x$ is linear in~$x$.
Given this, the entire study of equations from the class~\eqref{Eqvc5KdV0} within integrability theory can be implemented up to $G^\sim$-equivalence,
which coincides for this class with general contact (resp.\ point) equivalence
since the class~\eqref{Eqvc5KdV0} is normalized with respect to~$G^\sim$ in both the contact-transformation and point-transformation frameworks.
This includes not only local symmetries, cosymmetries, conservation laws, recursion operators, B\"acklund transformations and exact (in particular, $N$-soliton) solutions,
the study of which up to contact or point equivalence is quite common,
but also the Painlev\'e property, bilinear representations and Lax pairs, for which the linearity of elements of~$G^\sim$ in~$u$ is crucial.
In other words, Corollary~\ref{Corollary5KdVEqInessentialArbitraryElements} implies
that four arbitrary elements of the class~\eqref{Eqvc5KdV0}, $g$, $m$, $n$ and~$k$, are inessential
and can be set to canonical values, 1, 0, 0 and~0, from the very beginning,
which would significantly simplify the entire further consideration.

It is also reasonable to use $G^\sim$-equivalence when integrable cases have already been separated from the other equations of the form~\eqref{Eqvc5KdV0}.
Even if the above gauging of the arbitrary elements~$g$, $m$, $n$ and~$k$ by transformations from $G^\sim$ has been carried out,
it is still possible to play with the equivalence group~$G^\sim_1$ of the subclass in which $g=1$ and $m=n=k=0$.
As the integrability property leads to additional relations between arbitrary elements,
the form of integrable equations from the class~\eqref{Eqvc5KdV0} can be simplified more than the form of a general equation from this class
since subgroups of $G^\sim$ parameterized by constants then spring into action.
More specifically, consider the class of constant-coefficient fifth-order KdV equations of the form
\begin{equation}\label{Eqvc5KdVconst}
u_t+Auu_{xxx}+Bu_ xu_{xx}+Cu^2u_x+u_{xxxxx}=0,
\end{equation}
where $A,$ $B$ and $C$ are nonzero constants.
Up to scale transformations, there exist three inequivalent triples $(A,B,C)$ such that the corresponding equations of the form~\eqref{Eqvc5KdVconst} are integrable.
These are the triples $(10,20,30)$, $(15,15,45)$ and $(10,25,20)$~\cite{MikhailovShabatSokolov1991},
which respectively give
\par\medskip\par\noindent$\bullet$ Lax's fifth-order KdV equation~\cite{Lax1968a}
\begin{equation}\label{lax}
u_t+10 uu_{xxx}+20 u_xu_{xx}+30 u^2u_x+u_{xxxxx}=0;
\end{equation}
$\bullet$ the Sawada--Kotera equation~\cite{Sawada&Kotera1974} (equivalent to the Caudrey--Dodd--Gibbon equation~\cite{caud1976a})
\begin{equation}\label{SawadaKotera}
u_t+15 uu_{xxx}+15 u_xu_{xx}+45 u^2u_x+u_{xxxxx}=0;
\end{equation}
$\bullet$ the Kaup--Kupershmidt equation~\cite{Kaup1980}
\begin{equation}
\label{KaupKupershmidt}
u_t+10 uu_{xxx}+25 u_xu_{xx}+20 u^2u_x+u_{xxxxx}=0.
\end{equation}

\begin{corollary}
The usual equivalence group~$G^{\sim}_{\const}$ of the class~\eqref{Eqvc5KdVconst}
consists of the transformations
\begin{gather*}
\tilde t=\beta^5 t+\delta,
\qquad
\tilde x=\beta x+\gamma,
\qquad
\tilde u=\frac{u}{\beta^2\lambda},
\qquad
\tilde A={\lambda}A,\qquad \tilde B={\lambda}B,\qquad \tilde C ={\lambda^2}C.
\end{gather*}
Here  $\beta$, $\gamma$, $\delta$, and $\lambda$ are arbitrary constants with
$\beta\lambda\neq0$.
This group generates the entire equivalence groupoid $\mathcal G^\sim_{\const}$ of the class~\eqref{Eqvc5KdVconst},
i.e., the class~\eqref{Eqvc5KdVconst} is normalized in the usual sense.
\end{corollary}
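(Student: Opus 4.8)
The plan is to obtain $G^\sim_{\const}$ by specialization rather than by a fresh application of the direct method. The class~\eqref{Eqvc5KdVconst} is precisely the subclass of the variable-coefficient class~\eqref{Eqvc5KdV0} singled out by the constraints $a=A$, $b=B$, $c=C$ with $A$, $B$, $C$ nonzero constants, $g=1$ and $f=h=m=n=k=0$ (the condition $g(a^2+b^2+c^2)\ne0$ then holds automatically). Since the superclass~\eqref{Eqvc5KdV0} is normalized in the generalized extended sense with equivalence group $G^\sim$ given by~\eqref{Eqvc5KdVEquivGroup}, every admissible transformation between two equations of this subclass is in particular an admissible transformation of~\eqref{Eqvc5KdV0} and hence is the restriction of an element of $G^\sim$. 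Thus it suffices to determine which elements~\eqref{Eqvc5KdVEquivGroup} map a constant-coefficient equation of the form~\eqref{Eqvc5KdVconst} to another one.

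The key step is then to impose the defining constraints of the subclass on both the source and the target arbitrary elements and to read off the resulting restrictions on $\alpha$, $\beta$, $\gamma$, $\varphi$ and $\sigma$ from~\eqref{Eqvc5KdVEquivGroup}, which I would carry out in the following order. The equation for $\tilde m$ with $m=\tilde m=0$ forces $\varphi_t=0$, so $\varphi$ is constant; the equation for $\tilde k$ with $k=\tilde k=0$ forces $\beta_t=0$, so $\beta$ is constant; the equation for $\tilde g$ with $g=\tilde g=1$ then gives $\alpha_t=\beta^5$, whence $\alpha=\beta^5 t+\delta$ for a constant $\delta$. Since $m=0$ makes $e^{-\int m\,{\rm d}t}$ a nonzero constant, the equation for $\tilde h$ with $h=\tilde h=0$ reduces to $\sigma A=0$, and $A\ne0$ yields $\sigma=0$; finally the equation for $\tilde n$ with $n=\tilde n=k=\sigma=0$ collapses to $\gamma_t=0$, so $\gamma$ is constant. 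At this point the variable part of the transformation is exactly $\tilde t=\beta^5 t+\delta$, $\tilde x=\beta x+\gamma$, $\tilde u=\varphi u$ with constants $\beta$, $\gamma$, $\delta$, $\varphi$, which is manifestly a usual (projectable, arbitrary-element-independent) transformation.

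It remains to read off the action on $A$, $B$, $C$ and to verify consistency. With $\alpha_t=\beta^5$ the formulas for $\tilde a$, $\tilde b$, $\tilde c$ give $\tilde A=A/(\beta^2\varphi)$, $\tilde B=B/(\beta^2\varphi)$ and $\tilde C=C/(\beta^4\varphi^2)$; introducing the free constant $\lambda:=1/(\beta^2\varphi)$ (so that $\varphi=1/(\beta^2\lambda)$ and $\tilde u=u/(\beta^2\lambda)$) turns these into $\tilde A=\lambda A$, $\tilde B=\lambda B$, $\tilde C=\lambda^2 C$, exactly as asserted, while $\beta\lambda\ne0$ is equivalent to $\beta\varphi\ne0$. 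Conversely, each such transformation visibly preserves the subclass, so these transformations constitute its equivalence group and, having been obtained as the complete set of restrictions of $G^\sim$, generate the whole equivalence groupoid $\mathcal G^\sim_{\const}$, which is the claimed normalization in the usual sense. The computation is entirely routine; the only points requiring care are the bookkeeping of $e^{-\int m\,{\rm d}t}$ as a nonzero constant when $m=0$ (so that $\sigma$ is forced to vanish) and the recognition that the group parameter $\varphi$ is merely a reparametrization of the essential scaling constant $\lambda$. The genuine content, namely that no admissible transformation of the subclass can lie outside $G^\sim$, is supplied for free by the normalization of the superclass~\eqref{Eqvc5KdV0}, so there is no real obstacle to overcome.
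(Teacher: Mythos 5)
Your proposal is correct and follows essentially the same route as the paper: the statement is presented there as a corollary of Theorem~2, i.e., it is obtained precisely by restricting the generalized extended equivalence group~\eqref{Eqvc5KdVEquivGroup} of the normalized superclass~\eqref{Eqvc5KdV0} to the subclass with $g=1$, $f=h=m=n=k=0$ and constant nonzero $a=A$, $b=B$, $c=C$, which forces $\varphi_t=\beta_t=0$, $\alpha_t=\beta^5$, $\sigma=0$, $\gamma_t=0$ and yields the stated group after the reparametrization $\lambda=1/(\beta^2\varphi)$. Your bookkeeping of the constraints, the elimination of $\sigma$ via $\tilde h=0$ with $A\ne0$, and the inheritance of normalization from the superclass all match the paper's intended argument.
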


\newpage

In view of this assertion it is obvious, e.g., that the Caudrey--Dodd--Gibbon equation, in which $(\tilde A,\tilde B,\tilde C)=(30,30,180)$,
is similar to the Sawada--Kotera equation~\eqref{SawadaKotera}.
The similarity between these equations is realized by the scale transformation $\tilde t=t,$ $\tilde x=x$, $\tilde u=\frac12 u$.

\begin{theorem}An equation from the class~\eqref{Eqvc5KdV0} is similar to a constant-coefficient equation of the form~\eqref{Eqvc5KdVconst} with $ABC\ne0$
if and only if its coefficients satisfy the conditions
\begin{equation}\label{criterion}
\left(\frac{b}{a}\right)_t=\left(\frac{b^2}{c g}\right)_t=0,\quad
\left(\frac{f}{c}\right)_t=-m\frac{f}{c},\quad
\left(\frac{b}{g}\right)_t=(m-2k)\frac{b}{g},\quad af=2ch.
\end{equation}
\end{theorem}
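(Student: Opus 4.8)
The plan is to convert the similarity question into a solvability question for the parameters of the equivalence group, using the normalization established in Theorem~2. The constant-coefficient equations~\eqref{Eqvc5KdVconst} are exactly the members of the class~\eqref{Eqvc5KdV0} with $g=1$, $f=h=m=n=k=0$ and constant $a,b,c$. Since~\eqref{Eqvc5KdV0} is normalized in the generalized extended sense, every point transformation linking two of its equations belongs to $G^\sim$. Hence an equation~\eqref{Eqvc5KdV0} is similar to some equation of the form~\eqref{Eqvc5KdVconst} if and only if the parameters $\alpha,\beta,\gamma,\varphi$ and the constant $\sigma$ in~\eqref{Eqvc5KdVEquivGroup} can be chosen so that $\tilde g=1$, $\tilde f=\tilde h=\tilde m=\tilde n=\tilde k=0$, while $\tilde a,\tilde b,\tilde c$ are constants with $\tilde a\tilde b\tilde c\ne0$. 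The whole statement thus reduces to deciding when this system in the group parameters is solvable. I first note that, because $\tilde a,\tilde b,\tilde c$ are nonzero scalar multiples of $a,b,c$, the requirement $ABC\ne0$ is equivalent to $abc\ne0$; I work under this assumption, which makes every ratio appearing in~\eqref{criterion} well defined.

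Next I would read off one constraint at a time from~\eqref{Eqvc5KdVEquivGroup}. The conditions $\tilde m=0$ and $\tilde k=0$ fix $\varphi\propto e^{\int m\,{\rm d}t}$ and $\beta\propto e^{-\int k\,{\rm d}t}$ up to multiplicative constants, after which $\tilde g=1$ determines $\alpha$ by one quadrature. The conditions $\tilde f=0$ and $\tilde h=0$ each express the constant $\sigma$, namely $\sigma=\tfrac{f}{2c}e^{\int m\,{\rm d}t}$ and $\sigma=\tfrac{h}{a}e^{\int m\,{\rm d}t}$; forcing these two determinations to agree yields $af=2ch$, and forcing $\sigma$ to be an honest constant yields $(f/c)_t=-m(f/c)$. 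The cleanest observation is that $b/a$ and $b^2/(cg)$ are invariants of $G^\sim$: a one-line computation from~\eqref{Eqvc5KdVEquivGroup} gives $\tilde b/\tilde a=b/a$ and $\tilde b^2/(\tilde c\tilde g)=b^2/(cg)$, so their values for the image equation, $B/A$ and $B^2/C$, being constant force $(b/a)_t=0$ and $(b^2/(cg))_t=0$. Finally, after $\tilde g=1$ one has $\tilde b=b/(\beta^2\varphi g)$, whose constancy is exactly $(b/g)_t=(m-2k)(b/g)$; then the invariance of $b/a$ and $b^2/(cg)$ makes $\tilde a$ and $\tilde c$ automatically constant, and nonzero since $abc\ne0$.

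It remains to dispose of $\tilde n=0$, and the point is that it imposes no further restriction: with all other parameters already fixed, setting $\psi=\gamma/\beta$ turns $\tilde n=0$ into the first-order linear ODE $\psi_t-k\psi=-n-\sigma^2 c e^{-2\int m\,{\rm d}t}+\sigma f e^{-\int m\,{\rm d}t}$ for the still-free function $\gamma$, which is always solvable. This gives sufficiency by an explicit construction of the transformation, and running the same equivalences backwards shows that each of the five conditions in~\eqref{criterion} is forced by the existence of such a transformation. I expect the only real care—rather than a genuine obstacle—to lie in the bookkeeping of the nonvanishing hypotheses: confirming that $ABC\ne0$ indeed forces $abc\ne0$ (so the ratios are meaningful and the target coefficients are truly nonzero), and noting that the constants left free after the normalization steps correspond precisely to the scaling equivalence of~\eqref{Eqvc5KdVconst}, so that the resulting constant triple $(A,B,C)$ is pinned down only up to that scaling, which is all that is needed since we require similarity to \emph{some} member of~\eqref{Eqvc5KdVconst}.
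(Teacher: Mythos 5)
Your proposal is correct and is essentially the argument the paper intends: the paper states this theorem without a printed proof, leaving it as a direct consequence of the normalization established in Theorem~2, and your reduction of similarity to solvability of the parameter conditions in~\eqref{Eqvc5KdVEquivGroup} --- the invariance of $b/a$ and $b^2/(cg)$, the two determinations of $\sigma$ forcing $af=2ch$ and $(f/c)_t=-m(f/c)$, the constancy of $\tilde b$ giving $(b/g)_t=(m-2k)(b/g)$, and the final quadrature for $\gamma$ disposing of $\tilde n$ --- is exactly that machinery. Your bookkeeping of the nonvanishing hypotheses ($ABC\ne0$ if and only if $abc\ne0$, which also makes the ratios in~\eqref{criterion} well defined) is the correct reading of the theorem's implicit assumptions.
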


The coefficients of all integrable equations considered in~\cite{Xu2013,Yu&Gao&Sun&Liu2010,Yu&Gao&Sun&Liu2012}
(except the family of equations from~\cite{Xu2013} with coefficients presented in Case II) satisfy conditions~\eqref{criterion}.
Therefore, these equations are similar to constant-coefficient ones.
Namely, the equation~\cite{Yu&Gao&Sun&Liu2012}
\begin{gather}\label{eq_Saw_kot_ext}
u_t+15g\Upsilon uu_{xxx}+15g\Upsilon u_xu_{xx}+45g\Upsilon^2u^2u_x+gu_{xxxxx}+mu+n u_x+kxu_x=0,
\end{gather}
where $\Upsilon=\nu e^{\int(m-2k)\,{\rm d}t}$ and $\nu$ is a nonzero constant, is mapped to the Sawada--Kotera equation~\eqref{SawadaKotera}
by the transformation that differs from~\eqref{EqTransVC5KdV0} in the additional scaling of~$u$ by~$\nu$,
\looseness=-1
\begin{equation}\label{Tr_Saw-Kot}
\tilde t=\int ge^{-5\int\!k\,{\rm d}t}\,{\rm d}t,\qquad\tilde x=e^{-\int\!k\,{\rm d}t}x-\int ne^{-\int\!k\,{\rm d}t}\,{\rm d}t,\qquad \tilde u=\nu e^{\int\!m\,{\rm d}t}u.
\end{equation}
The same transformation maps any equation of the form~\eqref{Eqvc5KdV0} with $f=h=0$ and $a$, $b$, and $c$ given by
\begin{gather*}
a=10g\Upsilon,\quad b=20g\Upsilon,\quad c=30g\Upsilon^2\qquad\mbox{or}\qquad a=10g\Upsilon,\quad b=25g\Upsilon,\quad c=20g\Upsilon^2
\end{gather*}
to the constant-coefficient integrable equations~\eqref{lax} or~\eqref{KaupKupershmidt}, respectively.
Therefore, these two integrable cases were missed in~\cite{Yu&Gao&Sun&Liu2012}.

Another point transformation of the form
\[
\tilde t=\frac1{5\mu_1}\!\int\! ae^{-\int\!m\,{\rm d}t}\,{\rm d}t,\quad
\tilde x=x-\int\!\! \left(n-\frac{\mu_2^2}{\mu_1}ae^{-\int\!m\,{\rm d}t}\!\right)\!{\rm d}t,\quad
\tilde u=\kappa\left(e^{\int\!m\,{\rm d}t}u+\frac{\mu_2}{\mu_1}\right)
\]
with $\kappa=\mu_1/3$ (resp. $\kappa=\mu_1/2$) maps equations from the class~\eqref{Eqvc5KdV0} with $k=0$ and the other coefficients satisfying conditions I (resp.\ III)
to  the Sawada--Kotera equation~\eqref{SawadaKotera} (resp.\ the Kaup--Kupershmidt equation~\eqref{KaupKupershmidt}).

Each equation of the form~\eqref{Eqvc5KdV0} with $k=0$ and the coefficients presented
in Case II is reduced to the variable-coefficient equation
\begin{equation}\label{vcInt}
u_t+10 uu_{xxx}+20 u_xu_{xx}+30 u^2u_x+u_{xxxxx}+\psi(t)(6uu_x+u_{xxx})=0,
\end{equation}
where $\psi(t)=\dfrac{10\mu_1}3\dfrac ha e^{\int\!m\,{\rm d}t}$,
by the transformation
\begin{gather*}
\tilde t=\frac3{10\mu_1}\int\! ae^{-\int\!m\,{\rm d}t}\,{\rm d}t,\quad
\tilde x=x-\int\! n\,{\rm d}t,\quad
\tilde u=\frac{\mu_1}3 e^{\int\!m\,{\rm d}t}u.
\end{gather*}
The equation~\eqref{vcInt} is integrable since it is a ``linear superposition'', with time-dependent coefficients,
of Lax's fifth-order KdV equation~\eqref{lax} and the classical KdV equation $u_t+6uu_x+u_{xxx}=0$,
which are integrable and whose associated evolution vector fields commute~\cite[Theorem~3.1]{Fuchssteiner1993}.

Summarizing, we have explained the appearance of all integrable cases obtained in~\cite{Xu2013,Yu&Gao&Sun&Liu2010,Yu&Gao&Sun&Liu2012}
and have found that two integrable cases were missed in~\cite{Yu&Gao&Sun&Liu2010,Yu&Gao&Sun&Liu2012}.
All of these variable-coefficient integrable equations could be constructed
from known integrable equations using equivalence transformations of the class~\eqref{Eqvc5KdV0}.
Moreover, the possibility of gauging four arbitrary elements of the class~\eqref{Eqvc5KdV0} by equivalence transformations
means that these arbitrary elements are inessential within the study carried out in the above papers,
which is, therefore, unreasonably overcomplicated.

\section{Applications of point transformations for further analysis\\ of integrable variable-coefficient equations}
\label{SectionOnApplicationOfPointTransToLaxPairs}

When the similarity of integrable equations from the class~\eqref{Eqvc5KdV0} to well-known integrable equations is established,
further consideration is needless as all objects related to general integrable equations from the class~\eqref{Eqvc5KdV0}
and their properties can be easily derived from those of the classical similar equations using the similarity.
We demonstrate this derivation only for Lax pairs, for which the special structure of transformations from~$G^\sim$ is essential,
since the same procedure, e.g., for symmetries, conservation laws and exact solutions has already become a routine task.

The Sawada--Kotera equation~\eqref{SawadaKotera} admits the Lax pairs
\begin{gather*}
L={\partial_x}^3+3u\partial_x,\\
P=9{\partial_x}^5+45u{\partial_x}^3+45u_x{\partial_x}^2+15(2u_{xx}+3u^2)\partial_x
\end{gather*}
and
\begin{gather*}
L={\partial_x}^3+3u\partial_x+3u_x,\\
P=9{\partial_x}^5+45u{\partial_x}^3+90u_x{\partial_x}^2+15(5u_{xx}+3u^2)\partial_x+30(u_{xxx}+3uu_x).
\end{gather*}
Carrying out the transformation~\eqref{Tr_Saw-Kot} in the associated spectral problems, $L\psi=\lambda\psi$, $\psi_t=P\psi$,
we derive the corresponding Lax pairs for the variable-coefficient equation~\eqref{eq_Saw_kot_ext},
\begin{gather*}
L=e^{3\int\!k\,{\rm d}t}({\partial_x}^3+3\Upsilon u\partial_x),\\
P=9g{\partial_x}^5+45g\Upsilon u{\partial_x}^3+45g\Upsilon u_x{\partial_x}^2+(30g\Upsilon u_{xx}+45g\Upsilon^2u^2-kx-n)\partial_x
\end{gather*}
and
\begin{gather*}
L=e^{3\int\!k\,{\rm d}t}({\partial_x}^3+3\Upsilon u\partial_x+3\Upsilon u_x),\\
P=9g{\partial_x}^5+45g\Upsilon u{\partial_x}^3+90g\Upsilon u_x{\partial_x}^2+(75g\Upsilon u_{xx}+45g\Upsilon^2u^2-kx-n)\partial_x\\
\phantom{P=}{}+30g\Upsilon u_{xxx}+90g\Upsilon^2uu_x,
\end{gather*}
respectively.
Here and in what follows, we again use the notation $\Upsilon=\nu e^{\int(m-2k)\,{\rm d}t}$ with a nonzero constant~$\nu$.

The Kaup--Kupershmidt equation~\eqref{KaupKupershmidt} admits the Lax pair
\begin{gather*}
L={\partial_x}^3+2u\partial_x+u_x,\\
P=9{\partial_x}^5+30u{\partial_x}^3+45u_x{\partial_x}^2+5(7u_{xx}+4u^2)\partial_x+10(u_{xxx}+2uu_x).
\end{gather*}
Using the transformation~\eqref{Tr_Saw-Kot}, we obtain the corresponding Lax pair for the equation
\begin{gather*}
u_t+10g\Upsilon uu_{xxx}+25g\Upsilon u_xu_{xx}+20g\Upsilon^2u^2u_x+gu_{xxxxx}+mu+n u_x+kxu_x=0,
\end{gather*}
which is of the form
\begin{gather*}
L=e^{3\int k\,{\rm d}t}({\partial_x}^3+2\Upsilon u\partial_x+\Upsilon u_x),\\
P=9g{\partial_x}^5+30g\Upsilon u{\partial_x}^3+45g\Upsilon u_x{\partial_x}^2+(35g\Upsilon u_{xx}+20g\Upsilon^2u^2-kx-n)\partial_x\\
\phantom{P=}{}+10g\Upsilon u_{xxx}+20g\Upsilon^2uu_x.
\end{gather*}

Note that the above Lax pairs, which are constructed using the similarity to well-known constant-coefficient integrable equations,
are still associated with isospectral problems
in contrast to, e.g., Lax pairs constructed in~\cite{Yu&Gao&Sun&Liu2012}
directly for variable-coefficient equations.

\section{Discussion}

Admissible contact (resp.\ point) transformations between equations from classes of variable-coefficient PDEs
considered in the literature on integrability usually possess specific properties
including linearity with respect to the dependent variable and more particular fiber preservation and linearity with respect to independent variables
(e.g., when the transformation component for~$t$ depends only on~$t$ and the transformation component for~$x$ is linear in~$x$).
Moreover, these classes of PDEs are, as a rule, normalized in a certain sense,
i.e., their contact (resp.\ point) equivalence groupoids are generated by the corresponding equivalence groups,
although often different generalizations of the notion of usual equivalence group should be considered in order to achieve the normalization.


As a result, equivalence transformations fit well into the study of integrability of variable-coefficient PDEs,
where they can be used for several purposes:
\begin{itemize}\itemsep=0ex
\item
to search for inessential arbitrary elements of the class of variable-coefficient PDEs under consideration
and to gauge these elements to chosen simple values from the outset;
\item
to establish the similarity of integrable variable-coefficient PDEs, which are separated by another method (e.g., the Painlev\'e test) from the class under consideration,
to well-known (usually, constant-coefficient) integrable equations;
or, more generally, to select canonical representatives in the obtained list of integrable equations;
\item
to check listed integrable cases using the established similarity to previously known integrable equations;
\item
to derive all objects related to a singled-out integrable equation and their properties
from those of a similar well-studied integrable equation;
such objects include, but are not exhausted by, local symmetries, cosymmetries, conservation laws, recursion operators, B\"acklund transformations, exact solutions,
the Painlev\'e expansion, bilinear representations and Lax pairs.
\end{itemize}

All of the above purposes are illustrated in the present paper using the class~\eqref{Eqvc5KdV0} of variable-coefficient fifth-order KdV-like equations.
Out of the nine arbitrary elements of the class~\eqref{Eqvc5KdV0}, each of which is a smooth function of~$t$, only five are really essential.
Neglecting this fact leads to a~needless complication of any study of the class~\eqref{Eqvc5KdV0}, cf.\ \cite{Xu2013,Yu&Gao&Sun&Liu2010,Yu&Gao&Sun&Liu2012}.
A~convenient choice for arbitrary elements to be gauged is given by $g$, $m$, $n$ and~$k$, and natural constraints for them are $g=1$ and $m=n=k=0$.
We have shown that all the integrable equations found in \cite{Xu2013,Yu&Gao&Sun&Liu2010,Yu&Gao&Sun&Liu2012}
are similar with respect to point equivalence transformations to
(i.e., in the terminology of~\cite{Burtsev&Zakharov&Mikhailov1987}, are trivial deformations of)
well-known fifth-order integrable KdV-like equations, namely,
the Sawada--Kotera equation~\eqref{SawadaKotera},
the Kaup--Kupershmidt equation~\eqref{KaupKupershmidt}
and the Fuchssteiner time-dependent linear superposition~\eqref{vcInt}
of Lax's fifth-order KdV equation~\eqref{lax} and the classical KdV equation.
This similarity clearly explains the appearance of the three integrable cases found in~\cite{Xu2013}
and indicates that two integrable cases were definitely missed in~\cite{Yu&Gao&Sun&Liu2010,Yu&Gao&Sun&Liu2012}.
It also allows one to obtain all further results of~\cite{Xu2013,Yu&Gao&Sun&Liu2010,Yu&Gao&Sun&Liu2012} in a much easier way
from analogous results for the canonical counterparts, which are already presented in the literature.
Moreover, in the same way, the results of~\cite{Xu2013,Yu&Gao&Sun&Liu2010,Yu&Gao&Sun&Liu2012} can be enhanced and completed.
We have confined ourselves to deriving Lax pairs by the similarity since this is not too customary in contrast to, e.g., solutions in closed form.
The transformational method gives more possibilities for varying the form of Lax pairs than the direct construction of Lax pairs for variable-coefficient equations.
The leading coefficients of operators of Lax pairs presented in Section~\ref{SectionOnApplicationOfPointTransToLaxPairs} are not equal to the unit in general.
At the same time, owing to properly chosen coefficients, these Lax pairs are associated with isospectral problems as in the case of constant-coefficient equations.
Note that, in general, the study of variable-coefficient integrable equations may lead to essentially nonisospectral problems%
~\cite{Burtsev&Zakharov&Mikhailov1987,Sakhnovich1996,Sakhnovich2008}.

Although point transformations were applied in the field of integrability of variable-coefficient PDEs for a long time
\cite{Grimshaw1979a,Grimshaw1979b,Lugovtsov&Lugovtsov1969},
only some classes of such (1+1)-dimensional equations
(e.g., variable-coefficient third- and fifth-order KdV-like equations and variable-coefficient cubic Schr\"o\-din\-ger equations)
have been systematically studied within the transformational framework, see~\cite{Popovych&Vaneeva2010,Vaneeva2012,Vaneeva2013}, references therein and the present paper.
Therefore, a possible direction for investigation is to extend it to other classes of integrable equations or integrable systems of equations
including the case of more independent variables.

There are two regular methods for the construction of integrable evolution equations that explicitly involve the time variable~$t$
using known integrable equations;
these are the transformational method and the Fuchssteiner linear superposition of integrable evolution equations.
All integrable KdV-like equations discussed in Section~\ref{SectionOnfKdV} of the present paper can be obtained by one of these methods or their combination
from classical constant-coefficient KdV-like integrable equations (the Sawada--Kotera equation~\eqref{SawadaKotera},
the Kaup--Kupershmidt equation~\eqref{KaupKupershmidt}, Lax's fifth-order KdV equation~\eqref{lax} and the KdV equation itself).
An interesting question is whether there exists a~KdV-like equation with time-dependent coefficients that cannot be obtained in the above way.
What about other hierarchies?
Another open question concerns the existence of methods
that differ from the transformational and Fuchssteiner methods
but also allow for easy generation of new variable-coefficient integrable equations from known constant-coefficient integrable \mbox{equations.}

\subsection*{Acknowledgements}
The authors are grateful to Profs. A.~Sakhnovich and A.~Sergyeyev for useful discussions. We  also appreciate the valuable comments by the referees and Prof. S.~Lidstr\"om, which have led to essential improvement of the paper.
The research of ROP was supported by the Austrian Science Fund (FWF), project P25064.

\end{document}